\newtheorem{theorem}{Theorem}
\newtheorem{corollary}{Corollary}
\theoremstyle{definition}
\newtheorem{definition}{Definition}
\newtheorem{remark}{Remark}
\DeclareSIUnit{\rad}{rad}
 \title{\LARGE \bf Barrier-Riccati Synthesis for Nonlinear Safe Control with Expanded Region of Attraction}
 \author{Hassan Almubarak \quad Maitham F. AL-Sunni \quad  Justin T. Dubbin \\ Nader Sadegh  \quad  John M. Dolan \quad   Evangelos A. Theodorou%
 \thanks{H. Almubarak is with the Control \& Instrumentation Engineering Department, King Fahd University of Petroleum and Minerals, Dhahran, Saudi Arabia. Email: {\tt{halmubarak@kfupm.edu.sa}}.
 M. F. AL-Sunni is with the Department of Electrical \& Computer Engineering, and J. M. Dolan is with the Robotics Institute, Carnegie Mellon University, Pittsburgh, PA,
USA. Emails: {\tt{\{malsunni, jdolan\}@andrew.cmu.edu}}.
J. T. Dubbin and E. A. Theodorou are with the Daniel Guggenheim School of Aerospace Engineering, and N. Sadegh is with the George W. Woodruff School of Mechanical Engineering, Georgia Tech, Atlanta, GA, USA. Emails: {\tt \{jdubbin3, evangelos.theodorou, sadegh\}@gatech.edu}.}%
 }
\begin{document}

\maketitle
\thispagestyle{empty}
\pagestyle{empty}

\begin{abstract}
We present a Riccati-based framework for safety-critical nonlinear control that integrates the barrier states (BaS) methodology with the State-Dependent Riccati Equation (SDRE) approach. The BaS formulation embeds safety constraints into the system dynamics via auxiliary states, enabling safety to be treated as a control objective. To overcome the limited region of attraction in linear BaS controllers, we extend the framework to nonlinear systems using SDRE synthesis applied to the barrier-augmented dynamics and derive a matrix inequality condition that certifies forward invariance of a large region of attraction and guarantees asymptotic safe stabilization. The resulting controller is computed online via point-wise Riccati solutions. We validate the method on an unstable constrained system and cluttered quadrotor navigation tasks, demonstrating improved constraint handling, scalability, and robustness near safety boundaries. This framework offers a principled and computationally tractable solution for synthesizing nonlinear safe feedback in safety-critical environments.
\end{abstract}

\vspace{-5mm}
\section{Introduction}
Safe autonomous control is increasingly demanded in modern simple and complex control systems. Since constraints are ubiquitous in control systems, different techniques have been proposed in the literature to handle them. Early in the control literature, many of those techniques were based on rendering set invariance through Lyapunov analysis \cite{blanchini1999set}. These methods were primarily developed for linear systems or for systems with simple constraints, such as box constraints, and did not generalize easily to nonlinear settings. However, with advancements in control theory and the aid of increased computational power, many of these approaches have been extended and further investigated for nonlinear systems. Barrier methods including barrier certificates \cite{prajna2003barrier,prajna2004safety}, control barrier functions (CBFs) \cite{wieland2007constructive,ames2014control,romdlony2014uniting}, Barrier-Lyapunov functions (BLFs) \cite{tee2009barrier_Lyap_automatica}, and most recently barrier states (BaS) \cite{almubarak2021safeddp,almubarak2023BaSTheorey}, have shown strong success in the recent safe control literature.

\subsection{Background and Related Work}
The barrier states (BaS) augmentation method enforces forward invariance of the set of allowed states through augmenting the \textit{state of safety} by means of barrier functions into the model of the safety-critical control system. This creates a transformed problem with the objective of seeking a stabilizing control law for the augmented model \cite{almubarak2023BaSTheorey}. Linear control methods, such as pole-placement and the linear quadratic regulator (LQR), were used to safely stabilize the safety-critical system \cite{Almubarak2021SafetyEC}. The same concept was extended to trajectory optimization in \cite{almubarak2021safeddp}, where BaS augmentation was shown to consistently outperform penalty methods and CBF safety filters in providing safe optimal trajectories. Additionally, extension of BaS to adaptive control methods was studied in \cite{aoun2023l1, al2025safety}.  However, a byproduct of this method is the increase in the nonlinearity of the system, including the conversion of linear models to nonlinear ones \cite{almubarak2023BaSTheorey}. Hence, the validity of the controller, portrayed through the region of attraction of the closed-loop system, is limited. This calls for a practical nonlinear control that provides a sufficiently large region of attraction.

State-Dependent Riccati Equation (SDRE) approaches are practical and widely used methods for obtaining sub-optimal control in nonlinear dynamical systems. They are used across many industries for regulation, control, observation, estimation, and filtering of nonlinear systems \cite{cimen2008state,nekoo2019SDREReview}. By factorizing the dynamics into state-dependent coefficient matrices, the nonlinear system is expressed in a form analogous to the standard linear state-space equation. A further advantage is that the formulation resembles a linear one, while the inherent nonlinearities of the original system remain present in the closed-loop dynamics. It is worth mentioning that the solution to the SDRE is not guaranteed to be optimal \cite{nekoo2019SDREReview}. Local optimality, however, is guaranteed under the standard assumptions that the parameterization is point-wise stabilizable, detectable, and continuously differentiable \cite{cloutier1996nonlinear}. 


Early efforts to integrate safety into SDRE-based control include introducing a fictitious output representing the rate of constraint violation, as in \cite{cloutier2001state}. The idea was to penalize this output in the cost, steering it to zero to maintain constraint satisfaction. However, this approach requires the fictitious output dynamics to be invertible, limiting its applicability to systems with an equal number of constraints and control inputs. A later method \cite{devi2021barrier}, inspired by BLFs, penalizes constraint violations through a BLF in the state-dependent weights in the cost. Yet, it is restricted to box constraints and introduces non-quadratic, non-differentiable terms, with a flawed stability proof based on an incorrect assumption about the time-derivative of the Riccati solution.

\subsection{Contributions and Organization}

The main contributions of this paper are threefold:
\begin{itemize}
    \item First, using the Mean Value Theorem with proper factorization, we derive the barrier state equation into an extended linearization form, resulting in safety embedded models that preserve the structure of the linear-like form needed in the SDRE technique. \item Second, we propose the safety-embedded State-Dependent Riccati Equation control framework as a nearly optimal, nonlinear control method with embedded safety guarantees. 
    \item Third, a major contribution is the derivation of a matrix inequality condition that is shown to provide semi-global asymptotic safe stability if satisfied. 
\end{itemize}

The paper is organized as follows. \autoref{sec: problem and preliminaries} presents the problem formulation. \autoref{sec: BaS-SDRE} introduces BaS, derives the extended BaS dynamics, and formulates the safety-embedded SDRE for approximating the constrained HJB solution, with theoretical guarantees for safe, near-optimal control. Conditions for extending local to semi-global safe stability are also given. Finally, simulation results are presented in \autoref{sec: implementation examples}, followed by conclusions in \autoref{sec: conslusions}.

\section{Safety-Critical Optimal Control Problem Formulation} \label{sec: problem and preliminaries}
\subsection{Safety-Critical Optimal Control Problem}
Consider the optimal control problem
\begin{align} \label{eq: cost functional}
     \min_{u\in\mathcal{U}}\frac{1}{2}\int_{0}^{\infty} \Big( \mathbf{Q}\big(x(t)\big) + u^{\top}(t)R\big(x(t)\big) u(t)\Big) dt 
\end{align}
subject to the nonlinear control-affine system 
\begin{align} \label{eq: control system dynamics}
    \dot{x}(t)= f(x(t))+ g(x(t)) u(t)
\end{align}
where $t\in \mathbb{R}$ is time\footnote{For notational convenience, dependence on time $t$ is dropped throughout the paper.}, $x \in \mathcal{X} \subset \mathbb{R}^n$ is the state, $u \in \mathcal{U} \subset \mathbb{R}^m$ is the control input, $x_0=x(0)$ is the initial state, $\mathbf{Q}: \mathbb{R}^n \rightarrow \mathbb{R}^+ \ \forall x \neq 0$ is a state cost function, $R: \mathbb{R}^n \rightarrow \mathbb{R}^{m \times m} \succ 0 \ \forall x \neq 0$ is a control cost function, $f: \mathbb{R}^n \rightarrow \mathbb{R}^n$ and $g: \mathbb{R}^n  \rightarrow \mathbb{R}^{n\times m}$ are at least $C^1$, and we assume that for the unforced system $f(0)=0$, without loss of generality, and that the linearized system is stabilizable and detectable.

The system \eqref{eq: control system dynamics} is subject to the safe set $\mathcal{S}$ defined by a continuously differentiable function $h(x)$, which is termed the safety function. Specifically, the safe set is defined as ${\mathcal{S}} = \left\{x \in \mathcal{X} | h(x) > 0  \right\}$ with $\partial{\mathcal{S}} = \left\{x \in \mathcal{X} | h(x) = 0 \right\}$ being the boundary set of $\mathcal{S}$. The set $\mathcal{S}$ must be rendered invariant under some feedback control to ensure the system's safety. 
\begin{definition}[\cite{blanchini1999set}] \label{def: controlled invariant set}
The set $\mathcal{S} \subset \mathcal{X}$ is controlled invariant for the nonlinear control system \eqref{eq: control system dynamics} if for any $x(0) \in \mathcal{S}$, there exists a continuous feedback controller $u={K}(x)$, such that the closed-loop system $\dot{x}= f(x)+ g(x) {K}(x)$ has the unique solution $x(t) \in \mathcal{S} \ \forall t \in \mathbb{R}^{+}$. Hence,
\begin{align} \label{eq: safety condition}
    h\big(x(t)\big) > 0 \ \forall t \in \mathbb{R}^{+} ; \ h\big(x(0)\big) > 0
\end{align}
which we will refer to as the safety condition throughout the paper. Therefore, the controller $u=K(x)$ is deemed safe.
\end{definition}
It is assumed that the system is safely stabilizable (\cite[Definition~4]{almubarak2023BaSTheorey}) in a set containing the origin, i.e., there exists a continuous controller $u(x)$, for which the origin of the closed-loop system $\dot{x}=f(x)+g(x)u(x)$ is asymptotically stable and that the safe set is controlled invariant. 


\section{Safety Embedded State Dependent Riccati Equation Control} \label{sec: BaS-SDRE}

Consider the optimal control problem \eqref{eq: cost functional}-\eqref{eq: control system dynamics}. In the linear–quadratic regulator (LQR) setting, the optimal value function is quadratic and obtained via the algebraic Riccati equation (ARE), yielding a linear state feedback law. The state-dependent Riccati equation (SDRE) method extends this idea to nonlinear systems by factorizing the dynamics into a pseudo-linear form with state-dependent matrices, then solving a Riccati equation point-wise. This enables locally Riccati-based, near-optimal feedback while preserving the structure and intuition of the linear theory.

Specifically, using \emph{extended linearization}, also known as \emph{state-dependent coefficient (SDC) parameterization}, we have
\begin{align}
\label{eq:extend_f_to_a}
    f(x) = A(x) x
\end{align}
where $A(x) \in \mathbb{R}^{n \times n}$ is a state-dependent matrix. This transforms the control-affine system \eqref{eq: control system dynamics} into the form
\begin{align} \label{eq: extended control system}
    \dot{x} = A(x) x + g(x) u
\end{align}
which mirrors the structure of a linear system while retaining the nonlinearities of the original dynamics.

The state cost in \eqref{eq: cost functional} can also be expressed in a quadratic form using the extended linearization, leading to the reformulated optimal control problem:
\begin{align} \label{eq: extended cost functional}
    \min_{u \in \mathcal{U}} \frac{1}{2} \int_0^\infty \left( x^\top Q(x)\,x + u^\top R(x)\,u \right) dt
\end{align}
subject to \eqref{eq: extended control system}, where $Q:\mathbb{R}^n \rightarrow \mathbb{R}^{n \times n} \succeq 0$ and $R:\mathbb{R}^n \rightarrow \mathbb{R}^{m \times m} \succ 0$ are at least $C^1$ and it is assumed that the pairs $\big(A(x),g(x)\big)$ and $\big(A(x),Q^{\frac{1}{2}}(x)\big)$ are point-wise stabilizable and detectable, respectively, $\forall x \in \Omega$ such that the solution to this problem, known as the value function $V(x)$, exists and is continuously differentiable. For a detailed treatment of the assumptions required for existence, uniqueness, stability, and near-optimality, see \cite{cimen2008state}.

\subsection{Factorized BaS and Extended Safety Embedded Systems}


For the nonlinear system \eqref{eq: control system dynamics} and the safe set $\mathcal{S}$ defined by the safety function $h(x)$, we define a barrier $\beta(x):=\mathbf{B}(h(x))$, where $\mathbf{B}$ is a classic barrier function, also known as an interior penalty function, that must satisfy the following definition.
\begin{definition} \label{def:bf}
    The function $\mathbf{B} : \mathbb{R} \to \mathbb{R}$ is a barrier function if it is smooth on $(0, \infty)$,  $\mathbf{B}(\eta) \xrightarrow[]{\eta \rightarrow 0} \infty$ and $\mathbf{B}' \circ \mathbf{B}^{-1} (\eta)$ is smooth as well\footnote{For notational convenience we use $\mathbf{B}'(\eta)$ to denote $\frac{d\mathbf{B}}{d\eta}(\eta)$.}. Examples for such barrier functions include the inverse barrier function $\mathbf{B}(\eta) = \frac{1}{\eta}$ and the logarithmic barrier function $\mathbf{B}(\eta) = -\text{log}\left(\frac{\eta}{1+\eta} \right)$. 
\end{definition}
Consequently, by definition, $\beta(x) \to \infty$ if and only if $h(x) \to 0$ (approaching unsafe regions) \cite{almubarak2023BaSTheorey}. The barrier's time derivative is then given by 
\begin{align}
\dot{\beta}(x) &= \mathbf{B}'\big(h(x) \big) L_{f(x)} h(x) + \mathbf{B}'\big(h(x) \big) L_{g(x)} h(x) u
\end{align}
which can be written as
\begin{align}
\dot{\beta}(x) &= \mathbf{B}'\big(\mathbf{B}^{-1}(\beta) \big) L_{f(x)} h(x) + \mathbf{B}'\big(\mathbf{B}^{-1}(\beta) \big) L_{g(x)} h(x) u \nonumber
\end{align}
Then, the \textit{barrier state}, denoted as $z$, is defined by the state equation
\begin{align}
\label{eq: barrier state equation}
\dot{z} &= \mathbf{B}'\big(\mathbf{B}^{-1}(z+\beta^{\circ}) \big) L_{f(x)} h(x) \\ &+ \mathbf{B}'\big(\mathbf{B}^{-1}(z+\beta^{\circ}) \big) L_{g(x)} h(x) u - \gamma \left(z+\beta^{\circ} - \beta(x) \right) \nonumber
\end{align}
where $\gamma \in \mathbb{R}^+$, $\beta^{\circ} =\beta(0)$ and the shift by $\beta^{\circ}$ ensures that the BaS dynamics vanish at $(x = 0,u = 0)$. It is worth noting that the term scaled by $\gamma$ ensures stabilizability of the new system after the BaS is augmented \cite{almubarak2023BaSTheorey}. Furthermore, it can easily be verified that the BaS dynamics \eqref{eq: barrier state equation} are smooth, given the smoothness of the system dynamics, $\beta(x)$, and the barrier function $\mathbf{B}$. For detailed derivations and theoretical discussions on BaS and related stability results, see \cite{almubarak2023BaSTheorey}.

By construction, the barrier state $z$ remains bounded if and only if the barrier function $\beta(x)$ remains bounded~\cite[Lemma~1]{almubarak2023BaSTheorey}; therefore, ensuring boundedness of $z$ directly guarantees system safety. The barrier state augmentation thus transforms the original system into an extended nonlinear system, even if the original system dynamics and constraints were linear. Consequently, a practical nonlinear control approach, such as the SDRE method, is particularly suitable. In what follows, we present an extended linearization formulation of this safety-embedded augmented system that fits naturally within the SDRE framework.

To incorporate the barrier states within the SDRE framework, we begin by rewriting the barrier function $\beta(x) = \mathbf{B}(h(x))$ in a factorized, linear-like form: 
\begin{equation}
     \beta(x) - \beta^{\circ}=\tilde{\beta}(x) x
\end{equation}
where $\tilde{\beta}(x) =  \int_0^1 \frac{\partial \beta}{\partial x} (\mu x) d\mu$ by the Mean Value Theorem. Using this factorization, the barrier state dynamics \eqref{eq: barrier state equation} can equivalently be expressed in an extended linearization form as
\begin{align} 
\label{eq: factorized barrier state equation}
 \dot{z} & =   \underbrace{\left(\mathbf{B}'\big(\mathbf{B}^{-1}(z+\beta^{\circ})\big)\frac{d h(x)}{d x} A(x) + \gamma \tilde{\beta}(x)\right)}_{A^z(x,z)} x  \\
 &\quad+ \underbrace{\mathbf{B}'\big(\mathbf{B}^{-1}(z+\beta^{\circ})\big)L_{g(x)} h(x)}_{g^z(x,z)} u - \gamma z  \nonumber
\end{align}
By augmenting the barrier state equation \eqref{eq: factorized barrier state equation} with the factorized dynamics \eqref{eq: extended control system}, we obtain the following extended safety embedded system:
\begin{align} 
\label{eq: safety embedded extended system}
    \dot{\bar{x}} = \bar{A}(\bar{x}) \bar{x} + \bar{g}(\bar{x}) u 
\end{align}
where the augmented state vector is defined as $\bar{x} = \begin{bmatrix} x \\ z\end{bmatrix}\in\bar{\mathcal{X}}\subseteq\mathcal{X}\times\mathcal{B}$, and the extended state-dependent matrices are
\begin{align}
\label{eq: Abar and Bbar}
    \bar{A}(\bar{x}) = 
\begin{bmatrix}
A(x) & 0_{n \times 1}\\[2pt]
A^z(x,z) & -\gamma
\end{bmatrix}, \quad
\bar{g}(\bar{x}) = 
\begin{bmatrix}
g(x) \\[2pt]
g^z(x,z)
\end{bmatrix}
\end{align}

We refer to \eqref{eq: safety embedded extended system} as the \textit{extended safety embedded dynamics}. The structure of the extended state-dependent matrices \eqref{eq: Abar and Bbar} is similar to the linear systems presented in \cite{almubarak2023BaSTheorey}. Here, we present a single BaS for simplicity. In general, for a system with multiple ($\mathcal{Q}$) constraints, one can either design a single aggregated barrier state using an aggregated safety function defined by
    $\frac{1}{h} = \sum_{i=1}^{\mathcal{Q}} \frac{1}{h_i}$,
or introduce multiple BaS separately, as shown in \cite{almubarak2021safeddp} or \cite{almubarak2023BaSTheorey}, depending on the problem and a desired performance. Note that this system has its origin as an equilibrium point: $\bar{A}(\bar{x})\bar{x}=0$, when $\bar{x}=0$. This preserves the smoothness and stabilizability of the original control system \eqref{eq: control system dynamics} \cite{Almubarak2021SafetyEC}. Therefore, the safety constraint is \textit{embedded} in the closed-loop system's dynamics, and stabilizing the safety embedded system \eqref{eq: safety embedded extended system} implies enforcing safety for the safety-critical system \eqref{eq: control system dynamics}, i.e., forward invariance of the safe set $\mathcal{S}$ with respect to \eqref{eq: control system dynamics}. This crucial result is formally established in \cite[Theorem~1]{almubarak2023BaSTheorey}, and a direct implication of the theorem was discussed in \cite[Remark~2]{almubarak2023BaSTheorey}. This result is formalized in the following corollary.
\begin{corollary}[{\cite[Theorem~1]{almubarak2023BaSTheorey}}] \label{theorem: original systems is safe if embedded system is stable}
Let $u=K(\bar{x})$ be a continuous feedback controller such that the origin of the safety embedded closed-loop system, $\dot{\bar{x}}= \bar{A}(\bar{x}) \bar{x} + \bar{g}(\bar{x}) K(\bar{x})$, is asymptotically stable. Then, there exists an open set $\mathcal{A}_{\text{safe}} \subseteq \mathcal{S}$ such that $u$ is safely stabilizing $\forall x(0) \in \mathcal{A}_{\text{safe}} \subseteq \mathcal{S}$. 
\end{corollary}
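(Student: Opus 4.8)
The plan is to reduce the statement to a boundedness argument on the barrier state, exploiting the fact that the region of attraction of an asymptotically stable equilibrium is open and positively invariant. First I would let $\mathcal{D} \subseteq \bar{\mathcal{X}}$ denote the region of attraction of the origin for the closed-loop augmented system $\dot{\bar{x}} = \bar{A}(\bar{x})\bar{x} + \bar{g}(\bar{x})K(\bar{x})$. By the hypothesized asymptotic stability this set is open, contains the origin, and is positively invariant; crucially, it lies inside the domain on which the augmented vector field is defined and smooth, a domain on which $h(x) > 0$ holds by construction of the barrier (the smoothness requirement on $\mathbf{B}' \circ \mathbf{B}^{-1}$ in \autoref{def:bf} is precisely what keeps the augmented dynamics well posed for all admissible $z$).

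Next I would define the safe set of initial conditions through the consistent barrier-state initialization. Let $\iota(x) = (x,\, \beta(x) - \beta^{\circ})$, which is continuous on $\mathcal{S}$, and set $\mathcal{A}_{\text{safe}} = \{x \in \mathcal{S} : \iota(x) \in \mathcal{D}\}$. Since $\mathcal{D}$ is open and $\iota$ is continuous, $\mathcal{A}_{\text{safe}}$ is open; it is nonempty because $\iota(0) = (0,0) \in \mathcal{D}$, and $\mathcal{A}_{\text{safe}} \subseteq \mathcal{S}$ by definition. For any $x(0) \in \mathcal{A}_{\text{safe}}$ initialized with $z(0) = \beta(x(0)) - \beta^{\circ}$, the augmented trajectory starts in $\mathcal{D}$, hence remains there for all $t \geq 0$ and satisfies $\bar{x}(t) \to 0$; in particular the component $z(t)$ stays bounded along the trajectory.

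The core of the argument is then to transfer this boundedness back into safety. With the consistent initialization the correction term $-\gamma(z + \beta^{\circ} - \beta(x))$ in \eqref{eq: barrier state equation} vanishes identically, so $\dot{z} = \dot{\beta}(x)$ and therefore $z(t) = \beta(x(t)) - \beta^{\circ}$ for all $t$; this is the equivalence ``$z$ bounded $\iff$ $\beta(x)$ bounded'' recorded in \cite[Lemma~1]{almubarak2023BaSTheorey}. Boundedness of $z(t)$ thus forces $\beta(x(t)) = \mathbf{B}(h(x(t)))$ to remain bounded, and since $\mathbf{B}(\eta) \to \infty$ as $\eta \to 0^{+}$, the safety function $h(x(t))$ must stay bounded away from zero. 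Hence $h(x(t)) > 0$ for all $t \geq 0$, which is exactly the safety condition \eqref{eq: safety condition}, while $\bar{x}(t) \to 0$ yields $x(t) \to 0$; together these show that $u = K(\bar{x})$ is safely stabilizing on $\mathcal{A}_{\text{safe}}$.

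I expect the main obstacle to be the apparent circularity: the augmented dynamics are only defined where $h > 0$, yet $h > 0$ is what we are trying to conclude. The resolution I would emphasize is that positive invariance of $\mathcal{D}$ already confines the entire trajectory to the region where the barrier is finite, so the trajectory cannot approach $\partial\mathcal{S}$; the boundedness of $z$ then \emph{certifies}, rather than assumes, this confinement. A secondary point requiring care is that the conclusion is tied to the consistent initialization $z(0) = \beta(x(0)) - \beta^{\circ}$, which must be folded into the definition of $\mathcal{A}_{\text{safe}}$ for the identity $z(t) = \beta(x(t)) - \beta^{\circ}$, and hence the invocation of \cite[Lemma~1]{almubarak2023BaSTheorey}, to hold.
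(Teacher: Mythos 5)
The paper gives no proof of this corollary---it is stated as a direct consequence of \cite[Theorem~1]{almubarak2023BaSTheorey} and \cite[Remark~2]{almubarak2023BaSTheorey}---but your argument correctly reconstructs exactly that standard BaS reasoning: take the (open, positively invariant) region of attraction of the augmented origin, pull it back through the consistent initialization $z(0)=\beta(x(0))-\beta^{\circ}$ to define $\mathcal{A}_{\text{safe}}$, use invariance of the error $z+\beta^{\circ}-\beta(x)=0$ to identify $z(t)$ with $\beta(x(t))-\beta^{\circ}$, and conclude $h>0$ from boundedness of $z$ via the barrier property. Your handling of the apparent circularity (the vector field being defined only where $h>0$) via the maximal-interval/positive-invariance argument is also the right resolution, so the proposal is correct and consistent with the cited result's approach.
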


\subsection{Safety Embedded SDRE Control}
Leveraging the BaS technique, the safety-critical optimal control \eqref{eq: cost functional}-\eqref{eq: safety condition} is effectively transformed to the \textit{unconstrained} optimal control problem 
\begin{align} \label{eq: embedded cost functional}
   \min_{u \in \mathcal{U}} \frac{1}{2} \int_{0}^{\infty} \Big( \bar{x}^{\top} Q(\bar{x}) \bar{x} + u^{\top}R(\bar{x}) u\Big) dt 
\end{align}
subject to \eqref{eq: safety embedded extended system}, where $Q:\mathbb{R}^{n+q}\rightarrow \mathbb{R}^{{(n+q)}\times {(n+q)}} \succeq 0$, $R: \mathbb{R}^{n+q} \rightarrow \mathbb{R}^{m \times m} \succ 0 \ \forall \bar{x} \neq 0$, and $q$ is the number of constructed barrier states. Note that $Q$ needs to be chosen such that the pair $(\bar{A},Q^{\frac{1}{2}})$ is detectable after adding the BaS \cite{cimen2008state}. It is worth noting that with this formulation, the $Q$ matrix can be a simple real-valued constant matrix which ensures the convexity of the cost function. That is, the BaS can be penalized as any other state and hence the proposed method provides a natural way to incorporate constraints. By \autoref{theorem: original systems is safe if embedded system is stable}, solving this optimal control problem implies solving both the safety and stability problems. 

As discussed earlier, the quality of the solution also depends on the extended linearization of $f(x)$. It is worth mentioning that, other than evaluating $\tilde{\beta}$, the extended linearization of the safety embedded system completely depends on the factorization of the original dynamics. For such a problem, following the SDRE approach, the safety embedded sub-optimal control law is given by 
\begin{align} \label{eq: suboptimal safety embedded control law}
    u_{\rm{safe}}(\bar{x}) = - R^{-1}(\bar{x}) \bar{g}^{\top}(\bar{x}) P(\bar{x})\bar{x} 
\end{align}
and the associated safety embedded gradient of the value function is 
\begin{align} \label{eq: embedded value function}
\frac{\partial V}{\partial \bar{x}}^\top = P(\bar{x}) \bar{x}, \ P:\mathbb{R}^{n+q}\rightarrow \mathbb{R}^{{n+q} \times {n+q}}
\end{align}
where $P(\bar{x}) \succ 0$ solves the \textit{safety embedded state-dependent Riccati equation}, that we refer to as BaS-SDRE, which is given by
\begin{align} \label{eq: safety embedded state-dependent Riccati equation}
P(\bar{x})\bar{A}(\bar{x}) + \bar{A}^{\top}(\bar{x})P(\bar{x})- P(\bar{x}) \bar{G}(\bar{x}) P(\bar{x}) + Q(\bar{x}) = 0
\end{align} 
where $\bar{G}(\bar{x}):=\bar{g}(\bar{x}) R(\bar{x})^{-1} \bar{g}(\bar{x})^\top$. This control law results in the closed-loop safety embedded system 
\begin{align} \label{eq: closed-loop ses under bas-sdre}
    \dot{\bar{x}} = \bar{A}_c(\bar{x})\bar{x}
    \end{align}
    where
    \begin{align} \label{eq: closed-loop se_matrix under sdre}
        \bar{A}_c (\bar{x})= \bar{A}(\bar{x}) - \bar{g}(\bar{x})R^{-1}(\bar{x})\bar{g}^\top(\bar{x})P(\bar{x})
  \end{align}
Similar to the standard SDRE formulation, we assume that the pairs $\big(\bar{A}(\bar{x}),\bar{g}(\bar{x})\big)$ and
$\big(\bar{A}(\bar{x}),Q^{\frac{1}{2}}(\bar{x})\big)$ are point-wise stabilizable and detectable, respectively, $\forall \bar{x} \in \bar{\Omega} = \Omega \times \mathcal{B}$. Nonetheless, it is crucial now to note that those entities, and most importantly the feedback control law, are functions of the barrier states, which provide the controller with the safety status of the system. The controller, in turn, reacts to this status, as we shall see in the numerical simulations.

\subsection{Local Safe Stability and ROA Expansion Condition}
By construction, the new safety-embedded control solution via the BaS-SDRE follows the regular SDRE conditions in optimality and stability, albeit one must note that the state space of operation is now $\bar{\mathcal{X}}$. That is, the solution of the BaS-SDRE depends on the factorization of the system and the barrier that is not unique, which may lead to factorization-dependent control laws that provide different stability, safety, and optimality performances. \textcite{ccimen2010systematic} proposed a systematic method to define systems' and costs' matrices, yet a general rule for defining the \textit{optimal} factorization is still open research. In addition, \cite[Theorem~2]{cimen2008state} has shown the necessary conditions of local asymptotic stability of SDRE-based control. This was done by applying the Mean Value Theorem to $\bar{A}_c(\bar{x})$ giving $\dot{\bar{x}} =
\bar{A}_\mathrm{c}({0}) \bar{x}+\mathcal{A}(\bar{x}) \bar{x}$ for some $\mathcal{A}(\bar{x})$ that satisfies $||\mathcal{A}(\bar{x})|| =\mathcal{O}\left(||\bar{x}||\right)$. Since $\bar{A}_\mathrm{c}(0)$ is Hurwitz (due to point-wise stabilizability at~$\bar{x}=0$), and 
$\mathcal{A}(\bar{x})\bar{x}=\mathcal{O}(\|\bar{x}\|^2)$ vanishes faster than the linear term in $\bar{\Omega} = \Omega \times \mathcal{B}$, it follows that the origin is locally asymptotically stable \cite{cloutier1996nonlinear,cimen2008state}. This result naturally extends to safety embedded systems, given the preservation of smoothness and stabilizability \cite{almubarak2023BaSTheorey}, and by \autoref{theorem: original systems is safe if embedded system is stable}, the local safe stability of the original system is achieved. 

Nevertheless, local guarantees may not be sufficient and could be limited. For safety-critical systems, it is desirable to maximize the forward invariant region of the closed-loop system and similarly maximize the region of attraction (ROA) of the safe stabilization problem \cite{Wang2018Permissive}. In what follows, using the regularity conditions needed for the SDRE approach, we derive a condition based on the problem matrices, namely the cost function design, when one can expand asymptotic stability to larger sets, potentially achieving semi-global asymptotic stability on an arbitrarily large set $\Omega_c$. Note that the first part of the proposition in the theorem only states the SDRE regularity assumptions.

\begin{theorem} \label{thm:BaS-SDRE}
 Consider the safety embedded optimal control problem \eqref{eq: embedded cost functional} subject to the safety embedded system  \eqref{eq: safety embedded extended system} with $\bar{A}(\bar{x}), \bar{g}(\bar{x}), Q(\bar{x})$, and $R(\bar{x})$ are at least $C^1$, and the pairs $\big(\bar{A}(\bar{x}),\bar{g}(\bar{x})\big)$ and $\big(\bar{A}(\bar{x}),Q^{\frac{1}{2}}(\bar{x})\big)$ are point-wise stabilizable and point-wise detectable respectively $\forall \bar{x} \in \bar{\Omega}$ such that the point-wise solution $P(\bar{x})$ to the BaS-SDRE \eqref{eq: safety embedded state-dependent Riccati equation} renders $\bar{A}_c(\bar{x})$ point-wise Hurwitz. The safety-critical system \eqref{eq: control system dynamics} under the safety embedded nonlinear feedback control law \eqref{eq: suboptimal safety embedded control law} is (semi-globally) asymptotically \textbf{safely} stable with ROA $\Omega_c = \{x\in \Omega, \ \bar{x}^\top P(\bar{x}) \bar{x} \leq c\}$ for some $c >0$, if 
     \begin{align}
        Q(\bar{x}) - \dot{P}(\bar{x}) \succ 0
    \end{align}
 where $\dot{P}$ is the time derivative of the BaS-SDRE solution $P$ along the trajectories of \eqref{eq: closed-loop ses under bas-sdre}, which can be explicitly computed as specified in \eqref{eq: P_bar_dot}.
\end{theorem}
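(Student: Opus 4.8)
The plan is to treat the point-wise Riccati solution as a Lyapunov function for the closed-loop safety-embedded system \eqref{eq: closed-loop ses under bas-sdre} and then transfer the conclusion to the original dynamics \eqref{eq: control system dynamics} via \autoref{theorem: original systems is safe if embedded system is stable}. The natural candidate, consistent with the value-function gradient \eqref{eq: embedded value function}, is
\begin{align*}
V(\bar{x}) = \tfrac{1}{2}\,\bar{x}^\top P(\bar{x})\,\bar{x}.
\end{align*}
Since $P(\bar{x}) \succ 0$ on $\bar{\Omega}$ under the stated regularity assumptions, $V$ is positive definite with $V(0)=0$; on any region where $P$ is uniformly bounded below it is radially unbounded, so its sublevel sets $\Omega_c = \{\bar{x} : \bar{x}^\top P(\bar{x})\bar{x} \le c\}$ are the appropriate candidate invariant sets.

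First I would differentiate $V$ along the closed-loop flow $\dot{\bar{x}} = \bar{A}_c(\bar{x})\bar{x}$, retaining the state-dependence of $P$, to get
\begin{align*}
\dot{V} = \tfrac{1}{2}\,\bar{x}^\top\!\left(\bar{A}_c^\top P + P\,\bar{A}_c + \dot{P}\right)\bar{x}.
\end{align*}
Next I would eliminate the sign-indefinite cross terms using the BaS-SDRE \eqref{eq: safety embedded state-dependent Riccati equation}. Substituting $\bar{A}_c = \bar{A} - \bar{G}P$ and using $\bar{G}^\top = \bar{G}$ gives $\bar{A}_c^\top P + P\bar{A}_c = \bar{A}^\top P + P\bar{A} - 2P\bar{G}P$, and the Riccati identity $\bar{A}^\top P + P\bar{A} = P\bar{G}P - Q$ collapses this to
\begin{align*}
\dot{V} = -\tfrac{1}{2}\,\bar{x}^\top\!\left(Q(\bar{x}) + P\bar{G}P - \dot{P}(\bar{x})\right)\bar{x}.
\end{align*}
Because $P\bar{G}P = P\bar{g}R^{-1}\bar{g}^\top P \succeq 0$ (indeed it equals $u_{\mathrm{safe}}^\top R\,u_{\mathrm{safe}}$ with $u_{\mathrm{safe}}$ from \eqref{eq: suboptimal safety embedded control law}), the hypothesis $Q(\bar{x}) - \dot{P}(\bar{x}) \succ 0$ immediately yields $\dot{V} < 0$ for all $\bar{x} \neq 0$ wherever the inequality holds.

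With a strict Lyapunov decrease established, I would close the argument by a standard sublevel-set invariance step: on any compact $\Omega_c$ contained in the region where $Q - \dot{P} \succ 0$, the decrease $\dot{V}<0$ forces $\Omega_c$ to be forward invariant and drives every trajectory starting in $\Omega_c$ to the origin, giving asymptotic stability of the origin of \eqref{eq: closed-loop ses under bas-sdre} with ROA estimate $\Omega_c$. Since the matrix condition can be enforced on progressively larger regions through cost-matrix design, $c$ may be taken arbitrarily large, which furnishes the semi-global claim. Invoking \autoref{theorem: original systems is safe if embedded system is stable} then converts asymptotic stability of the embedded origin into safe stabilization of \eqref{eq: control system dynamics}, i.e.\ forward invariance of $\mathcal{S}$ together with convergence of $x$ to the origin, for every initial condition whose embedded lift lies in $\Omega_c$.

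The main obstacle I anticipate is the $\dot{P}(\bar{x})$ term. Because $P$ solves the Riccati equation only point-wise, $\dot{P} = \sum_i (\partial P/\partial \bar{x}_i)\,\dot{\bar{x}}_i$ requires differentiating the implicitly-defined solution; I would obtain the explicit expression by differentiating \eqref{eq: safety embedded state-dependent Riccati equation} componentwise (each $\partial P/\partial\bar{x}_i$ solves a Lyapunov-type linear equation with a unique solution because $\bar{A}_c$ is point-wise Hurwitz, so $P$ is indeed $C^1$), which is the content of the referenced formula \eqref{eq: P_bar_dot}. The delicate point—and precisely where the earlier BLF-based analysis in the literature erred—is that $\dot{P}$ is a genuine, sign-indefinite quantity that cannot be assumed to vanish or be negligible; the strength of the theorem is that the single verifiable inequality $Q - \dot{P} \succ 0$ exactly captures when the Lyapunov decrease survives this term. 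The remaining care is topological: I would take $\Omega_c$ to be the connected, compact component containing the origin, guaranteed via a uniform positive-definiteness bound on $P$ over $\Omega$, so that the invariance argument is valid.
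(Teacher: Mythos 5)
Your proposal is correct and follows essentially the same route as the paper: the Lyapunov candidate $\bar{x}^\top P(\bar{x})\bar{x}$, the substitution of the BaS-SDRE identity to get $\dot{W} \leq -\bar{x}^\top\big(Q(\bar{x})-\dot{P}(\bar{x})\big)\bar{x}$, the sublevel-set invariance argument, the computation of $\dot{P}$ by differentiating the Riccati equation into a Lyapunov equation, and the final transfer to safe stabilization via \autoref{theorem: original systems is safe if embedded system is stable}. Your added remarks on the well-posedness of $\partial P/\partial\bar{x}_i$ (via the Hurwitz property of $\bar{A}_c$) and on selecting the connected component of the sublevel set are minor refinements of, not departures from, the paper's argument.
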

\begin{proof}
    Consider the candidate Lyapunov function
    \begin{align}
       W(\bar{x}) = \bar{x}^\top P(\bar{x}) \bar{x}
    \end{align}
    It can be seen that the time derivative of $W(\bar{x})$ along the trajectories of the closed-loop system is given by
     \begin{align}
     \dot{W} & = \bar{x}^{\top}\Big( P(\bar{x}) \bar{A}_c({\bar{x}}) +  \bar{A}^{\top}_c({\bar{x}})  {P}({\bar{x}}) \Big) {\bar{x}} + \bar{x}^\top \dot{P}(\bar{x}) \bar{x} \\
     & = {\bar{x}}^{\top}\Big(- {Q}({\bar{x}}) - {P}({\bar{x}}) \bar{G}(\bar{x}) {P}({\bar{x}})\Big) {\bar{x}} + \bar{x}^\top \dot{P}(\bar{x}) \bar{x} \nonumber
     \end{align}
    Hence, 
    \begin{equation} 
    \label{eq: Derivative Candidate}
        \dot{W}(\bar{x}) = \bar{x}^\top \left(-\hat{Q}(\bar{x})- P(\bar{x})\bar{G}(\bar{x})P(\bar{x}) \right) \bar{x} \leq -\bar{x}^\top \hat{Q}(\bar{x}) \bar{x}
    \end{equation}
    where $\hat{Q}(\bar{x})=Q(\bar{x})-\dot{P}(\bar{x})$. 
    
    Then, by the hypothesis, $\dot{W}(\bar{x}) \leq -\bar{x}^\top \hat{Q}(\bar{x}) \bar{x} < 0$ for all $\bar{x} \in \bar{\Omega}_c= \{\bar{x}\in \bar{\Omega}, \ \bar{x}^\top P(\bar{x}) \bar{x} \leq c\}$. Finally, by Lyapunov theory $\bar{\Omega}_c$ is positively invariant \cite{khalil2002nonlinear}, implying that $\bar{\Omega}_c$ is indeed an ROA for the closed-loop system. It remains to show how $\dot{P}$ can be explicitly computed. 
    
    Given that under the regularity assumptions, the value function is smooth, $P(\bar{x})$ is differentiable. Hence, as in standard Lyapunov analyses, the time derivative $\dot{P}$ along the trajectory of the system of the solution $P$ of the SDRE can be determined by differentiating \eqref{eq: safety embedded state-dependent Riccati equation}:
    \begin{align} \label{eq: d/dt SDRE}
        \dot{P}(\bar{x})\bar{A}_c(\bar{x}) + \bar{A}_c^{\top}(\bar{x})\dot{P}(\bar{x})+\tilde{Q}(\bar{x}) =0
    \end{align}
    where 
$       \tilde{Q}(\bar{x})=P(\bar{x})\dot{\bar{A}}(\bar{x})+\dot{\bar{A}}^{\top}(\bar{x})P(\bar{x})- P(\bar{x}) \dot{\bar{G}}(\bar{x}) P(\bar{x}) + \dot{Q}(\bar{x})$
    and the time derivatives of the elements of $\bar{A}$, $\bar{G}$ and $Q$ are computed as follows:
    \small
    \begin{equation*}
       \dot{\bar{A}}_{ij}=\frac{\partial \bar{A}_{ij}}{\partial \bar{x}} \bar{A}_c(\bar{x})\bar{x}, \ \dot{\bar{G}}_{ij}=\frac{\partial \bar{G}_{ij}}{\partial \bar{x}} \bar{A}_c(\bar{x})\bar{x}, \ \ \dot{Q}_{ij}=\frac{\partial Q_{ij}}{\partial \bar{x}} \bar{A}_c(\bar{x})\bar{x}
    \end{equation*}
    \normalfont
    The solution of the Lyapunov equation \eqref{eq: d/dt SDRE} is given by $\dot{P}(\bar{x})$
    which satisfies
    \begin{align} \label{eq: P_bar_dot}
    \text{vec}\big(\dot{P}(\bar{x})\big)=-(\bar{A}_c^\top(\bar{x})\otimes I+I\otimes \bar{A}_c^\top(\bar{x}))^{-1} \text{vec}\big(\tilde{Q}(\bar{x})\big) 
    \end{align}
    where $\otimes$ is the Kronecker product and $\text{vec}(\cdot)$ is the vectorization operator. Therefore, the safety embedded system is (semi-globally) asymptotically stable on $\bar{\Omega}_c = \{\bar{x} \in \bar{\Omega}, \ \bar{x}^\top P(\bar{x}) \bar{x} \leq c\}$. 
    
    Finally, let $\Omega_c \times \mathcal{B}\subset \bar{\Omega}_c$. 
    By \autoref{theorem: original systems is safe if embedded system is stable}, the nonlinear feedback controller safely stabilizes the original system \eqref{eq: control system dynamics} (semi-globally) with $\Omega_c$ as its positively invariant ROA.
\end{proof}

\begin{remark}
It is important to note that the state-feedback control law \eqref{eq: suboptimal safety embedded control law} derived via the SDRE is \emph{suboptimal} in the sense that it enforces a quadratic structure on the value function and satisfies the HJB equation only point-wise. Consequently, while the controller guarantees asymptotic stability on $\Omega_c$, it does not, in general, solve the full nonlinear \textit{optimal} control problem. \autoref{thm:BaS-SDRE} shows that $Q(\bar{x})$ is chosen to ensure that $Q(\bar{x}) - \dot{P} \succ 0$ for all $\bar{x} \in \bar{\Omega}_c$ for the largest possible $c>0$, and consequently the largest possible ROA, $\bar{\Omega}_c$. It is worth noting that with an appropriate choice of $Q(\bar{x})$ (e.g., by fixing $R(\bar{x})$), $\bar{\Omega}_c$ can be made arbitrarily large, yielding a semi-global asymptotic stability of the origin of the safety embedded system.
\end{remark}

\section{Numerical Simulations} \label{sec: implementation examples}
We validate our method on two systems: a constrained 2D linear toy system and a planar quadrotor moving in environments full of obstacles. Our approach is compared against the \textit{BaS-LQR} approach, which linearizes the dynamics to solve the infinite horizon optimal control problem. In all experiments, we set $\gamma = 1$ for the barrier states and choose the barrier function as $\mathbf{B}(\eta) = 1/\eta$. 

\subsection{Unstable Constrained Linear System}
Consider the unstable system
\begin{align}
    \begin{bmatrix} \dot{x}_1 \\ \dot{x}_2 \end{bmatrix}= \begin{bmatrix} 1 & -5 \\ 0 & -1\end{bmatrix} \begin{bmatrix} x_1 \\ x_2 \end{bmatrix} + \begin{bmatrix} 0 \\ 1\end{bmatrix} u
\end{align}
We define the safety constraint  as $(x_1 - 2)^2 + (x_2 - 2)^2 \geq 0.5^2 $.
Following the formulation presented earlier, the barrier state equation is given by
\begin{equation*} \begin{split}
    \dot{z} = -\left(z + \frac{4}{31}\right)^2 &  \left[2(x_1-2) \ \ 2(x_2-2)\right] \Bigg(\begin{bmatrix} 1 & -5 \\ 0 & -1\end{bmatrix} \begin{bmatrix} x_1 \\ x_2 \end{bmatrix} \\+ 
    & \begin{bmatrix} 0 \\ 1\end{bmatrix} u \Bigg) - \left(z + \frac{4}{31} - \beta(x) \right)
\end{split} \end{equation*}
Following the proposed technique, we rewrite the safety-embedded system as
\begin{equation*}
    \dot{\bar{x}} = \bar{A}(\bar{x}) \bar{x} + \bar{g}(\bar{x}) u
\end{equation*}
where 
$\bar{A}(\bar{x}) = 
\begin{bmatrix}
1 & -5 & 0 \\ 0 & -1 & 0 \\ A_1^z(x,z) & A_2^z(x,z) & -1
\end{bmatrix} , \
\bar{g}(\bar{x}) = 
\begin{bmatrix}
0 \\ 1 \\ g^z
\end{bmatrix}$ 
with $A_1^z(x,z) = \frac{1}{(2x_1((x_1 - 2)^2 + (x_2 - 2)^2 - 1/4) + 100)} - (2x_1 - 4)(z + \frac{4}{31})^2$, $A_2^z(x,z) = \frac{1}{(2x_2((x_1 - 2)^2 + (x_2 - 2)^2 - 1/4) + 100)} + 10(x_1 - 2)(z + \frac{4}{31})^2+ 2(x_2 - 2)(z + \frac{4}{31})^2$ and $g^z = -2(x_2 - 2)(z + \frac{4}{31})^2$.

\autoref{fig: linear toy state space LQR vs SDRE comparison} shows trajectories resulting from numerical simulations of the closed-loop system starting from random initial conditions for the different approaches. It can be seen that the nonlinear controller, BaS-SDRE, can safely stabilize the system for all the different initial conditions at which the linear controller, BaS-LQR, fails.

\begin{figure}[t]
\centering
\vspace{-4mm}
    {\includegraphics[trim=0 0 0 0, clip, width=.8\linewidth]{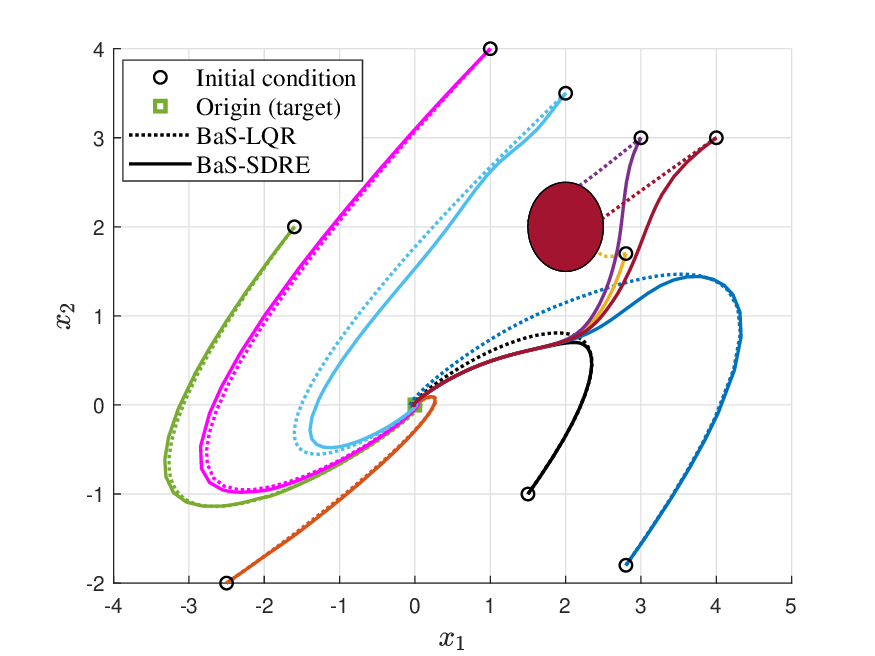}}
      \caption{Simulations of the closed-loop system under the BaS-LQR controller (dotted) and under the proposed BaS-SDRE controller (solid) starting from different initial conditions (black circles) with the unsafe region shown as a dark red circle.}
      \label{fig: linear toy state space LQR vs SDRE comparison}
      \vspace{-5mm}
\end{figure}  

To see how the nonlinearity stemming from the constraint (and specifically the barrier state in the safety embedded approach) affects the controller, \autoref{fig: linear toy far i.c. with gains and BaS} shows a simulation example using an initial condition that is relatively far from the origin. Interestingly, it can be seen that the SDRE solution (the gain) follows the linear solution when the system is far from the unsafe region, and it departs from it as the system approaches the unsafe region due to the nonlinearity coming from the barrier state. This can be seen in the plot of the barrier state, over time, where there is a clear peak when the obstacle is very close.

\begin{figure*}[th]
\centering
    {\includegraphics[trim=0 0 0 0, clip, width=.3\linewidth]{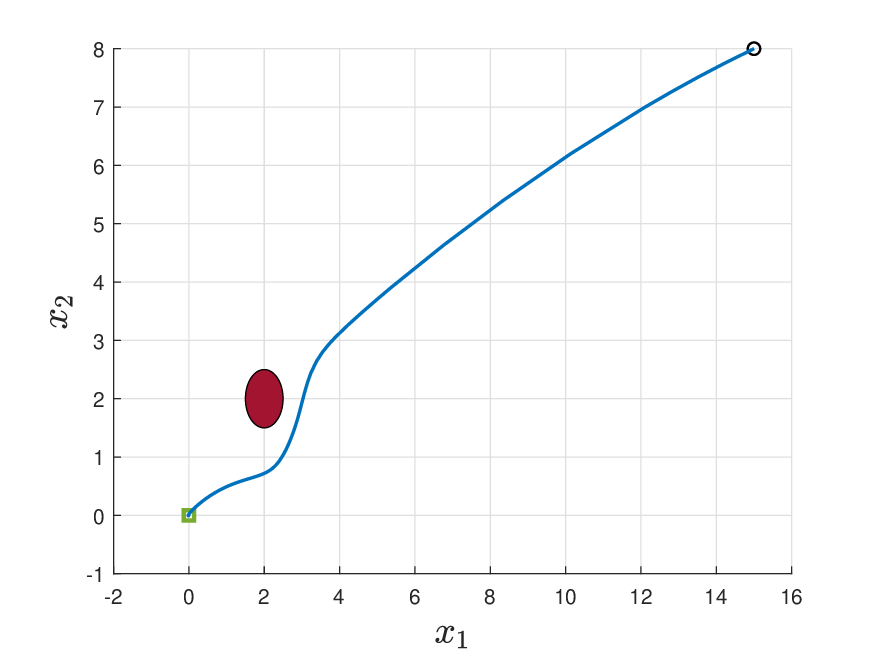}}
    {\includegraphics[trim=0 0 0 0, clip, width=.3\linewidth]{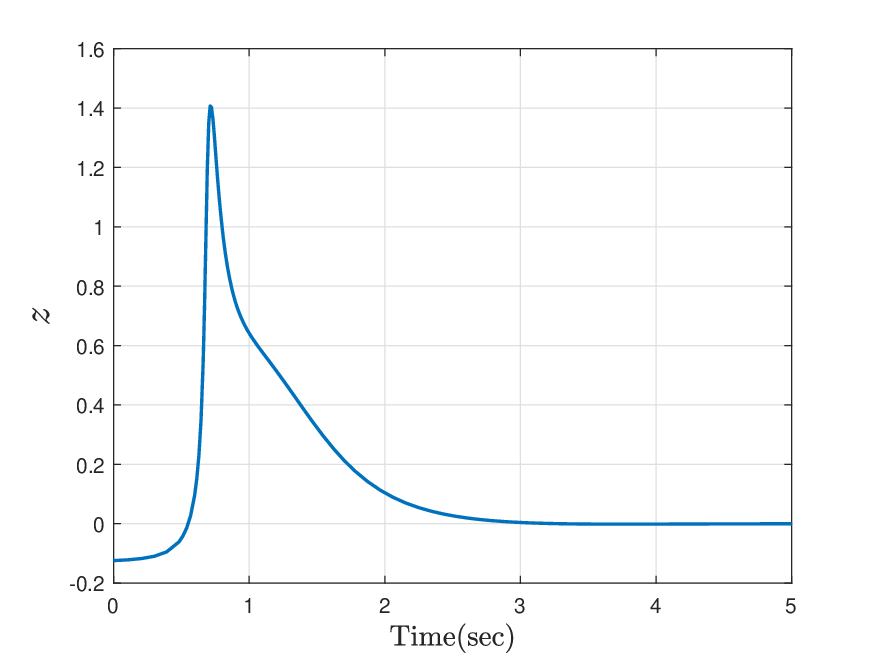}}
    {\includegraphics[trim=0 0 0 0, clip, width=.3\linewidth]{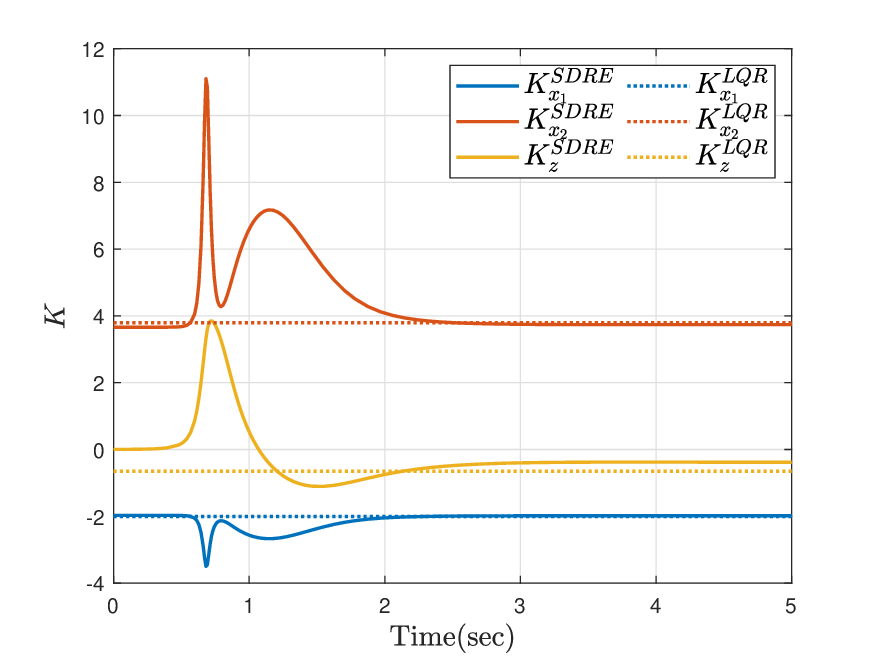}}
      \caption{Simulation results show the system trajectory, barrier state evolution, and feedback gains under the BaS-SDRE controller. As the barrier state grows, the BaS-SDRE adapts its gains to capture the system’s nonlinear behavior, differing from the fixed LQR gains.}
      \label{fig: linear toy far i.c. with gains and BaS}
      \vspace{-5mm}
\end{figure*}

\subsection{Planar quadrotor}
Consider the quadrotor dynamics given by
\begin{align}
    \begin{bmatrix}
        \Ddot{x} \\ 
        \Ddot{y} \\
        \Ddot{\psi}
    \end{bmatrix}  = 
    \begin{bmatrix}
        \frac{1}{m} (u_1 + u_2) \sin{(\psi)}  \\ 
        \frac{1}{m} (u_1 + u_2) \cos{(\psi)} - g  \\
        \frac{l}{2J} (u_2 - u_1)   \\
    \end{bmatrix}
\end{align}
where ($x,y$) is the location, $\psi$ is the orientation, $m$ is the mass of the quadrotor, $l$ is the distance from the center to the propellers of the quadrotor, $J$ is the moment of inertia, $g$ is the acceleration due to gravity, and $u_1$ and $u_2$ denote the right and left thrust forces respectively. The objective is to safely thrust the quadrotor to hover from a random initial condition where the quadrotor might flip. Specifically, we want to safely stabilize the quadrotor at $x=\SI[per-mode = symbol]{0}{\meter}, y=\SI[per-mode = symbol]{0}{\meter}, \psi = \SI[per-mode = symbol]{0}{\rad}$, where there are multiple circular obstacles of different sizes on the way. For the example, we use the following parameters: $m=\SI[per-mode = symbol]{1}{\kilogram}$, $l=\SI[per-mode = symbol]{0.3}{\meter}$, $g = \SI[per-mode = symbol]{9.81}{\meter\per\square\second}$, and $J = 0.2 \cdot m \cdot l^2$. 

\autoref{fig:quadrotor_sim} shows two different scenarios for a quadrotor navigating in a crowded environment. Our approach safely completes the task, while vanilla SDRE (without considering safety specifications) and BaS-LQR fail to ensure the safety of the quadrotor in both scenarios, the case where we have spaces between the different obstacles, and the case when a lot of obstacles are close to each other, making it hard for the quadrotor to pass in between them

\begin{figure}[th]
    \centering
    \includegraphics[width=.7\linewidth]{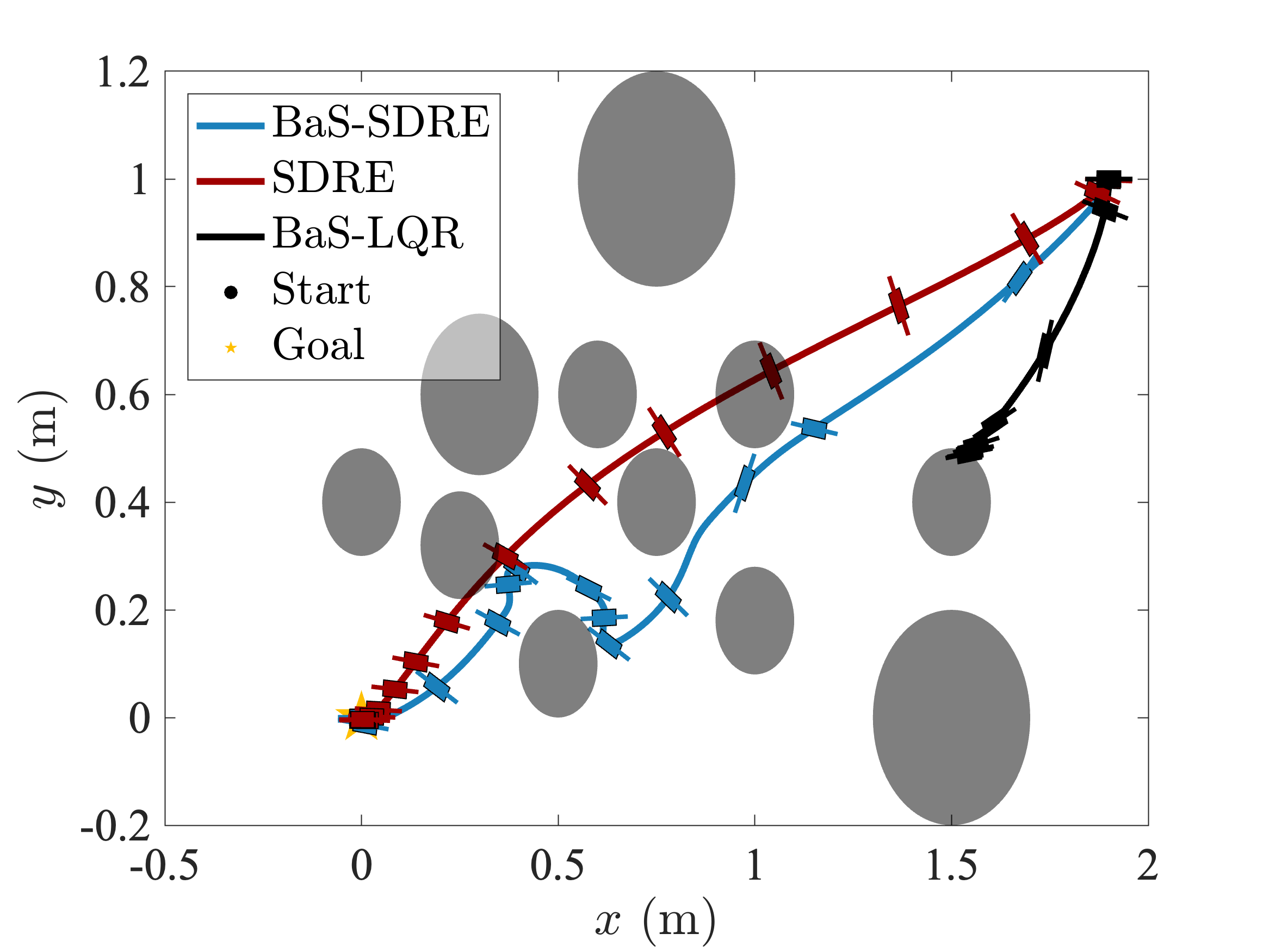} \\
    \includegraphics[width=.7\linewidth]{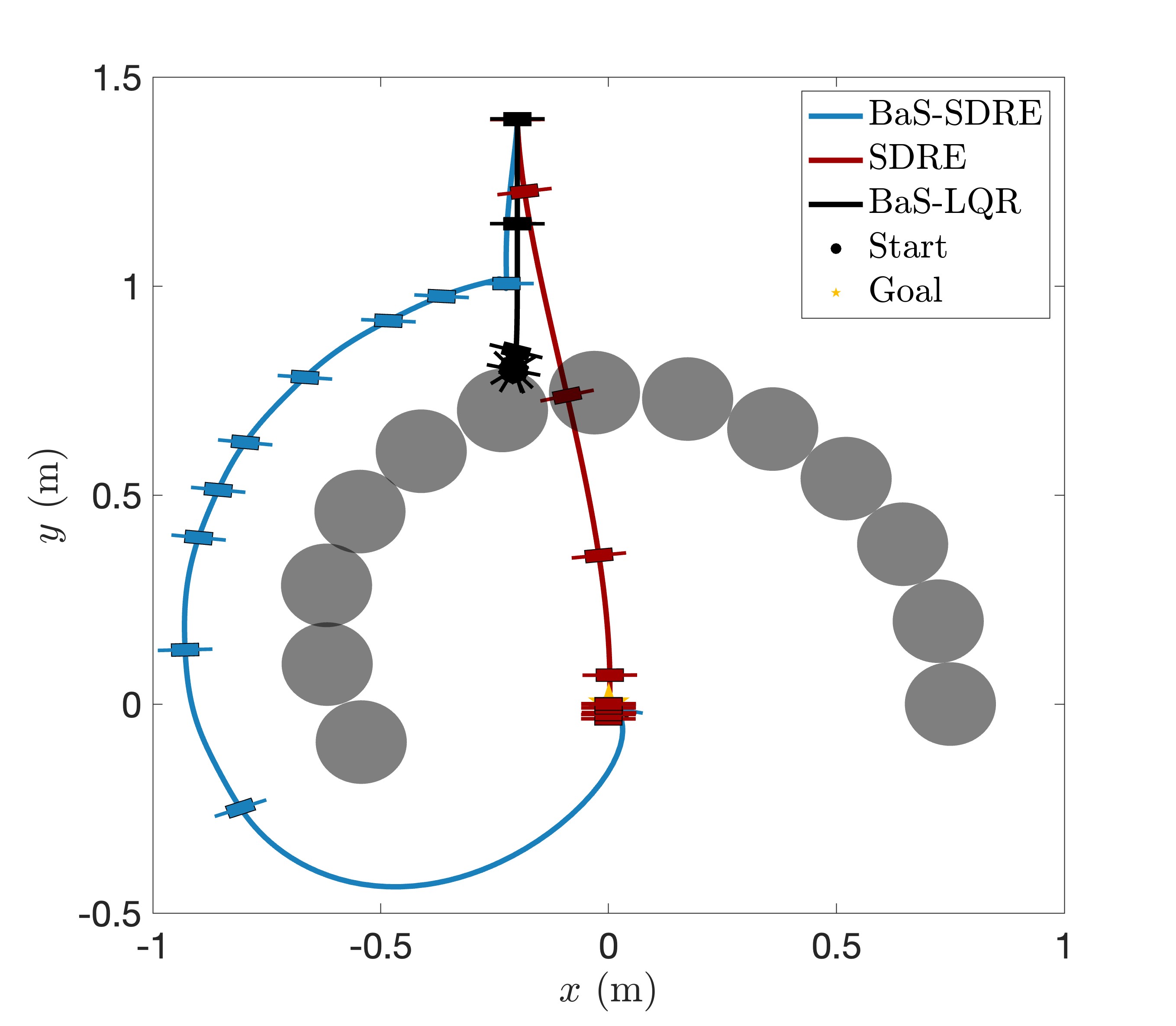}
    \caption{A planar quadrotor navigating in two different obstacle courses. Our method successfully stabilizes the quadrotor at the target position, while vanilla SDRE and BaS-LQR cannot, having their trajectories crossing unsafe regions.}
    \label{fig:quadrotor_sim}
    \vspace{-7.5mm}
\end{figure}

\section{Conclusions}  \label{sec: conslusions}
This paper presented BaS-SDRE, a unified framework for synthesizing safe nonlinear feedback by embedding barrier dynamics into a Riccati-based structure. By extending the state-dependent Riccati equation formulation to explicitly incorporate barrier-state augmentation, the method enables control design that jointly addresses performance and safety. The proposed approach preserves the pseudo-linear structure required for Riccati synthesis, allowing efficient computation of nonlinear state feedback while ensuring that safety constraints are encoded dynamically within the system evolution. We established that the closed-loop system achieves asymptotic safe stabilization, and derived a matrix inequality condition that certifies forward invariance of an enlarged region of attraction under mild assumptions. Through simulations on constrained and underactuated systems, BaS-SDRE demonstrated improved constraint handling and tracking performance relative to baseline SDRE and BaS-LQR controllers. These results suggest that BaS-SDRE provides a principled and scalable tool for nonlinear safe control, with future extensions targeting robustness to uncertainty, discrete-time implementations, and integration with learning-based adaptation.

\vspace{-2mm}
\printbibliography
\end{document}